\DeclareAcronym{ap}{
    short = AP,
    long  = Action Propagation,
}
\DeclareAcronym{bf}{
    short = BF,
    long  = Barrier Function,
}
\DeclareAcronym{c2c}{
    short = C2C,
    long  = Center-to-Center,
}
\DeclareAcronym{cav}{
    short = CAV,
    long  = Connected and Automated Vehicle,
}
\DeclareAcronym{cbf}{
    short = CBF,
    long  = Control Barrier Function,
}
\DeclareAcronym{cg}{
    short = CG,
    long = Center of Gravity,
    short-plural = s,
    long-plural-form = Centers of Gravity,
}
\DeclareAcronym{clf}{
    short = CLF,
    long  = Control Lyapunov Function,
}
\DeclareAcronym{cpm}{
    short = CPM,
    long  = Cyber-Physical Mobility
}
\DeclareAcronym{cpmlab}{
    short = CPM Lab,
    long  = Cyber-Physical Mobility Lab
}
\DeclareAcronym{dcbf}{
    short = DCBF,
    long  = Discrete-Time Control Barrier Function,
}
\DeclareAcronym{ecbf}{
    short = ECBF,
    long  = Exponential \ac{cbf},
    short-indefinite = an,
}
\DeclareAcronym{hocbf}{
    short = HOCBF,
    long  = High-Order \ac{cbf},
    short-indefinite = an,
}
\DeclareAcronym{il}{
    short = IL,
    long  = Imitation Learning,
    short-indefinite = an,
}
\DeclareAcronym{mappo}{
    short = MAPPO,
    long  = Multi-Agent \ac{ppo},
    short-indefinite = an,
}
\DeclareAcronym{maddpg}{
    short = MADDPG,
    long  = Multi-Agent Deep Deterministic Policy Gradient,
    short-indefinite = an,
}
\DeclareAcronym{mas}{
    short = MAS,
    long  = Multi-Agent System,
    short-indefinite = an,
}
\DeclareAcronym{mdp}{
    short = MDP,
    long  = Markov decision process,
    short-indefinite = an,
}
\DeclareAcronym{mg}{
    short = MG,
    long  = Markov Game,
    short-indefinite = an,
}
\DeclareAcronym{ml}{
    short = ML,
    long  = Machine Learning,
    short-indefinite = an,
}
\DeclareAcronym{mtv}{
    short = MTV,
    long  = Minimum Translation Vector,
    short-indefinite = an,
}
\DeclareAcronym{mpc}{
    short = MPC,
    long  = model predictive control,
    short-indefinite = an,
}
\DeclareAcronym{marl}{
    short = MARL,
    long  = Multi-Agent Reinforcement Learning,
    short-indefinite = an,
}
\DeclareAcronym{ocp}{
    short = OCP,
    long  = Optimal Control Problem,
    short-indefinite = an,
    long-indefinite = an,
}
\DeclareAcronym{per}{
    short = PER,
    long  = Prioritized Experience Replay
}
\DeclareAcronym{pomdp}{
    short = POMDP,
    long  = Partially Observable \ac{mdp}
}
\DeclareAcronym{pomg}{
    short = POMG,
    long  = Partially Observable \ac{mg}
}
\DeclareAcronym{ppo}{
    short = PPO,
    long  = Proximal Policy Optimization
}
\DeclareAcronym{qp}{
    short = QP,
    long  = Quadratic Program,
}
\DeclareAcronym{rhc}{
    short = RHC,
    long  = receding horizon control,
    short-indefinite = an,
}
\DeclareAcronym{rl}{
    short = RL,
    long  = Reinforcement Learning,
    short-indefinite = a,
}
\DeclareAcronym{sat}{
    short = SAT,
    long = Separating Axis Theorem,
    short-indefinite = an
}
\DeclareAcronym{ttcbf}{
    short = TTCBF,
    long  = Truncated Taylor \ac{cbf},
}
\DeclareAcronym{zsg}{
    short = ZSG,
    long  = Zero-Shot Generalization,
}
\begin{document}
\title{
    High-Order Control Barrier Functions: Insights and a Truncated Taylor-Based Formulation
    \thanks{This research was supported by the Bundesministerium für Digitales und Verkehr (German Federal Ministry for Digital and Transport) within the project ``Harmonizing Mobility'' (grant number 19FS2035A).}
}

\author{
    Jianye Xu$^{1}$\,\orcidlink{0009-0001-0150-2147},~\IEEEmembership{Student~Member,~IEEE},
    Bassam Alrifaee$^{2}$\,\orcidlink{0000-0002-5982-021X},~\IEEEmembership{Senior Member, ~IEEE}
    \thanks{$^{1}$The author is with the Department of Computer Science, RWTH Aachen University, Germany, \texttt{xu@embedded.rwth-aachen.de}}
    \thanks{$^{2}$The author is with the Department of Aerospace Engineering, University of the Bundeswehr Munich, Germany, \texttt{bassam.alrifaee@unibw.de}}
}
    \maketitle
\thispagestyle{IEEEtitlepagestyle}
\begin{abstract}
We examine the complexity of the standard High-Order Control Barrier Function (HOCBF) approach and propose a truncated Taylor-based approach that reduces design parameters. First, we derive the explicit inequality condition for the HOCBF approach and show that the corresponding equality condition sets a lower bound on the barrier function value that regulates its decay rate. Next, we present our Truncated Taylor CBF (TTCBF), which uses a truncated Taylor series to approximate the discrete-time CBF condition. While the standard HOCBF approach requires multiple class $\mathcal{K}$ functions, leading to more design parameters as the constraint's relative degree increases, our TTCBF approach requires only one. We support our theoretical findings in numerical collision-avoidance experiments and show that our approach ensures safety while reducing design complexity.
\par\medskip
\end{abstract}
\acresetall


\section{Introduction}\label{sec:introduction}
\Acp{cbf} have received increasing attention in recent years as a way to ensure safety in dynamical systems \cite{ames2014control}. The main idea behind \acp{cbf} is to guarantee that the system state stays within a predefined safe set by enforcing suitable constraints on the control inputs. By maintaining the forward invariance of this safe set, the system remains safe.

A key component in \acp{cbf} is class $\mathcal{K}$ functions, which regulate how safety constraints are enforced. Specifically, they control the decay rate of the \ac{cbf} value. The tuning of these functions strongly affects the balance between conservatism and aggression. A conservative choice can overly restrict system performance and impair the feasibility of the optimization problem, whereas an aggressive choice can allow excessive risk and potentially compromise safety \cite{zeng2021safetycriticala}. The example in \cref{fig_example_class_k} illustrates how different class $\mathcal{K}$ functions influence a robot's obstacle-avoidance behavior, ranging from slow and conservative to rapid and aggressive maneuvers. Note that conservatism in \acp{cbf} can also caused by factors like uncertainty handling \cite{alan2023parameterized} and over-approximation of the safe set \cite{xu2024learningbased}, which are out of our scope.

\begin{figure}[t]
    \centering
    \includegraphics[width=1.0\linewidth]{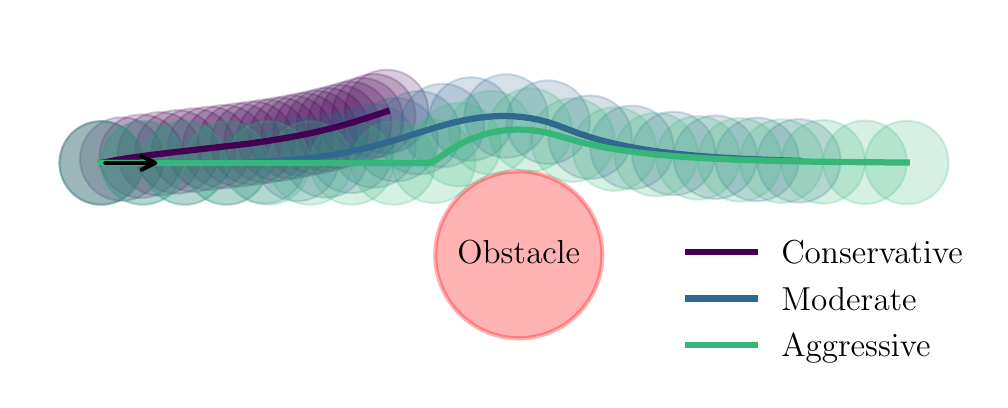}
    \caption{An obstacle-avoidance example with three different class $\mathcal{K}$ functions: conservative, moderate, and aggressive. Footprints in circles; trajectories in solid lines.}
    \label{fig_example_class_k}
\end{figure}

The standard \ac{cbf} approaches cannot handle constraints with high relative degrees. The relative degree of a constraint is the number of times one must differentiate it along the system dynamics before the control input appears. Early solutions include a backstepping-based approach \cite{hsu2015control} and exponential \acp{cbf} with a virtual input-output linearization method \cite{nguyen2016exponential}. Subsequently, the \ac{hocbf} approach introduced in \cite{xiao2019control} provided a systematic way to extend standard \ac{cbf} formulations by chaining high-order derivatives. However, it uses multiple class $\mathcal{K}$ functions, equaling the relative degree of the candidate \ac{cbf}. This complicates control design, particularly for constraints with high relative degrees, since each class $\mathcal{K}$ function requires tuning. Various studies have been proposed to reduce the effort in the design of the class $\mathcal{K}$ functions in \acp{hocbf}, including penalty methods \cite{xiao2019control}, adaptive \ac{cbf} \cite{xiao2022adaptive, xiong2023discretetime}, and learning-based class $\mathcal{K}$ functions \cite{ma2022learning, kim2025learning}.

In this work, we offer new insights into the \ac{hocbf} approach from \cite{xiao2019control} by deriving the explicit form of its safety condition and showing the imposed lower bound on the \ac{cbf} value. This bound provides insights into the resulting controller and can serve to guide the tuning of class $\mathcal{K}$ functions. Furthermore, we propose an alternative approach to handle constraints with high relative degrees, termed \ac{ttcbf}. It simplifies control design by using only one class $\mathcal{K}$ function.

\Cref{sec:preliminaries} reviews \acp{cbf} and \acp{hocbf}.
\Cref{sec:main} presents the main results, including the analytical lower bound imposed by \acp{hocbf} and our approach.
\Cref{sec:experiments} validates the main results numerically.
\Cref{sec:discussions} discusses our approach, and
\cref{sec:conclusions} concludes.

\section{Preliminaries} \label{sec:preliminaries}
In this section, we briefly revisit \acp{cbf} and \acp{hocbf} for continuous‐time and discrete‐time systems.

We consider an input-affine control system given by
\begin{equation} \label{eq:affine-sys}
    \dot{\bm{x}} = f(\bm{x}) + g(\bm{x})\,\bm{u},
\end{equation} 
where the system state $\bm{x} \in \mathcal{X} \subset \mathbb{R}^n$ and the control input $\bm{u} \in \mathcal{U} \subseteq \mathbb{R}^m$. Functions $f: \mathbb{R}^n \to \mathbb{R}^n$ and $g: \mathbb{R}^n \to \mathbb{R}^{n \times m}$ are globally Lipschitz.

\begin{definition}[Class $\mathcal{K}$ functions]
    A continuous function $\alpha: [0,a) \to [0,\infty)$ (with $a>0$) is said to belong to class~$\mathcal{K}$ if it is strictly increasing and $\alpha(0)=0$.   
\end{definition}

\begin{definition}[Forward invariant set] \label{def:forward-invariant}
    A set \(C \subset \mathcal{X}\) is forward invariant for system \labelcref{eq:affine-sys} if its solution starting at any \(\bm{x}(t_0) \in C\) satisfies $\bm{x}(t) \in C, \forall t \ge t_0$.
\end{definition}

For a continuously differentiable function $h: \mathcal{X} \to \mathbb{R}$, let 
\begin{equation} \label{eq:safe-set}
    \begin{aligned}
        C &= \{\bm{x} \in \mathcal{X}: h(\bm{x}) \ge 0\}.
    \end{aligned}
\end{equation}

\begin{definition}[\acp{cbf} \cite{ames2019control}]
    Given a set $C$ as in \eqref{eq:safe-set}, a continuously differentiable function $h(\bm{x}): \mathcal{X} \to \mathbb{R}$ is a candidate \ac{cbf} for system \eqref{eq:affine-sys} on $C$ if there exists a class $\mathcal{K}$ function $\alpha$ such that 
    \begin{equation}\label{eq:cbf-cond-ct}
        \sup_{\bm{u} \in \mathcal{U}}\,\big[ \dot{h}(\bm{x},\bm{u}) + \alpha \bigl( h(\bm{x}) \bigr)\big] \ge 0, \quad \forall \bm{x}\in C,
    \end{equation}
    where $\dot{h}(\bm{x},\bm{u})$ denotes the time derivative of $h$ and can also be expressed as $L_f h(\bm{x}) + L_g h(\bm{x})\,\bm{u}$, with $L_f h$ and $L_g h$ denoting the Lie derivatives of $h$ along $f$ and $g$, respectively.
\end{definition}

Now, consider system \labelcref{eq:affine-sys} in the discrete-time domain given by
\begin{equation} \label{eq:affine-sys-discrete}
    \bm{x}_{k+1} = f(\bm{x}_{k}) + g(\bm{x}_{k})\,\bm{u}_{k},
\end{equation}
where the subscript $k \in \mathbb{N}$ denotes the time step. 

\begin{definition}[Discrete-Time \acp{cbf} \cite{agrawal2017discrete}] \label{def:d-cbf}
    Given a set $C$ as in \eqref{eq:safe-set}, a continuous function $h(\bm{x}): \mathcal{X} \to \mathbb{R}$ is a candidate discrete-time \ac{cbf} for system \eqref{eq:affine-sys-discrete} on $C$ if there exists a class $\mathcal{K}$ function $\alpha$ satisfying $\alpha(z) \le z$ such that
    \begin{equation} \label{eq:cbf-cond-dt}
        \sup_{\bm{u} \in \mathcal{U}}\,\big[\, \Delta h \bigl(\bm{x}_k,\bm{u}_k\bigr) + \alpha \bigl( h(\bm{x}_{k}) \bigr) \,\big] \;\ge 0, \quad \forall \bm{x}_{k}\in C.
    \end{equation}
\end{definition}

One common variant is the \textit{discrete-time exponential \acp{cbf}} \cite{agrawal2017discrete}, in which a linear class $\mathcal{K}$ function is used, i.e., $\alpha(h(\bm{x}_k)) = \lambda \, h(\bm{x}_k)$, where $\lambda \in (0, 1]$ denotes its parameter. Substituting $\Delta h \bigl(\bm{x}_k,\bm{u}_k\bigr)$ in \labelcref{eq:cbf-cond-dt} with forward difference $h(\bm{x}_{k+1}) - h(\bm{x}_{k})$ and rearranging yields 
\begin{equation} \label{eq:cbf-cond-dte}
    \sup_{\bm{u} \in \mathcal{U}}\,\big[\,h\bigl( \bm{x}_{k+1} \bigr) - (1-\lambda)\,h(\bm{x}_{k})\,\big] \;\ge 0, \quad \forall \bm{x}_{k}\in C.
\end{equation}
At each time step $k \in \mathbb{N}$, the control input enforces $h(\bm{x}_{k+1}) \ge (1-\lambda)\,h(\bm{x}_k)$, rendering $h(\bm{x}_k)$ may decrease by at most the factor $1-\lambda$. Applying this recursively yields the imposed lower bound
\begin{equation} \label{eq:lower-bound-de-cbf}
    h(\bm{x}_{k}) \ge (1-\lambda)^{k-k_0} h(\bm{x}_{k_0}), \quad \forall k \ge k_0,
\end{equation}
where $\bm{x}_{k_0} \in \mathcal{X}$ denotes the initial state at time step $k_0$. This lower bound is an exponential function in time step $k$ and hence the name exponential \acp{cbf}. It becomes 
\begin{equation} \label{eq:lower-bound-ce-cbf}
    h(\bm{x}) \ge e^{-\lambda(t-t_0)} h_{\bm{x}_{t_0}}, \quad \forall t \ge t_0
\end{equation}
in the continuous-time domain, where $\bm{x}_{t_0} \in \mathcal{X}$ denotes the initial state at time $t_0$, and $\lambda > 0$ (instead of $\lambda \in (0,1]$).

In practice, \acp{cbf} often have high \emph{relative degrees}. The relative degree of a function is the number of times one must differentiate it along the system dynamics until the control input appears explicitly. 
\begin{definition}[Relative degree]
    A continuously differentiable function \( h: \mathcal{X} \to \mathbb{R} \) is said to have relative degree \( r \in \mathbb{N} \) with respect to system \labelcref{eq:affine-sys} or \labelcref{eq:affine-sys-discrete} if $\forall \bm{x} \in \mathcal{X}$, \( L_g L_f^{i} h(\bm{x}) = 0, \forall i \in \{0, 1, \ldots, r-2\} \) and \( L_g L_f^{r-1} h(\bm{x}) \neq 0 \).
\end{definition}
Since we use \acp{cbf} to define constraints, we use the relative degrees of \acp{cbf} and relative degrees of constraints interchangeably. If $r>1$, the standard \ac{cbf} condition \eqref{eq:cbf-cond-ct} or \eqref{eq:cbf-cond-dt} fails to capture the control input $\bm{u}$ (because $L_g h(\bm{x})=0$). Study \cite{xiao2019control} proposed \acp{hocbf} to address this limitation. Recursively, define auxiliary functions $\Psi_0(\bm{x}) \coloneqq h(\bm{x})$ and  
\begin{equation} \label{eq:chain-hocbf}
    \Psi_{i}(\bm{x}) \coloneqq  \dot{\Psi}_{i-1}(\bm{x}) + \alpha_i(\Psi_{i-1}(\bm{x})), \quad \forall i = \{ 1, \dots, r \},
\end{equation}
where each $\alpha_i$ is a differentiable class $\mathcal{K}$ function. 
Let
\begin{equation} \label{eq:high-order-sets}
    C_{i} \coloneqq \bigl\{\bm{x} \in \mathcal{X} : \Psi_{i-1} ( \bm{x} ) \ge 0 \bigr\}, \quad \forall i \in\{1, \ldots, r \}.
\end{equation}
Note that by construction, $C_1$ is the original set where we want to ensure forward invariance.

\begin{definition}[\acp{hocbf} \cite{xiao2019control}] \label{def:continuous-t-hocbf}
    Given sets $C_i$ as in \eqref{eq:high-order-sets}, a continuously differentiable function \( h: \mathcal{X} \to \mathbb{R} \) is a candidate \ac{hocbf} with relative degree $r$ for system \labelcref{eq:affine-sys} if there exist continuously differentiable class $\mathcal{K}$ functions $\alpha_i, \forall i\in \{1,...,r\}$, such that 
    \begin{equation} \label{eq:hocbf-cond}
        \sup_{\bm{u} \in \mathcal{U}}\,\big[ \; \Psi_{r}(\bm{x}, \bm{u}) \; \big] \ge 0, \quad \forall \bm{x} \in \bigcap_{i=1}^{r} C_i.
    \end{equation}
\end{definition}

As shown in \cite{xiao2019control}, enforcing \labelcref{eq:hocbf-cond} inductively ensures $\Psi_{i} \ge 0, \forall i \in \{ 0, \ldots, r-1 \}$, thereby rendering set $C_1$ (and $\bigcap_{i=2}^{r} C_i$) forward invariant. A common variant of \acp{hocbf} is \textit{exponential \acp{hocbf}}, which adopt linear class $\mathcal{K}$ functions when defining auxiliary functions \labelcref{eq:chain-hocbf}, i.e., $\alpha_i\bigl( \Psi_{i-1}(\bm{x}) \bigr) = \lambda_i \Psi_{i-1}(\bm{x})$, where $\lambda_i > 0 $ denotes the parameter of the linear class $\mathcal{K}$ function $\alpha_i$.

\section{Main Results} \label{sec:main}
In \cref{sec:insights-hocbf}, we provide insights into the \ac{hocbf} approach introduced by \cite{xiao2019control}. In \cref{sec:taylor-based-alternative}, we propose an alternative approach based on the truncated Taylor series to handle \acp{cbf} with high relative degrees.

\subsection{Insights Into \Acp{hocbf}} \label{sec:insights-hocbf}
Our insights include the explicit form of the condition imposed by exponential \acp{hocbf} and the analytical lower bound imposed by this condition.

\begin{lemma} \label{lemma:ct-hocbf-cond-flatten}
    Consider an \ac{hocbf} $h$ in \cref{def:continuous-t-hocbf} with relative degree \(r\). For each \(i \in \{1, \ldots, r\}\), the corresponding auxiliary function defined in \eqref{eq:chain-hocbf} with a linear class $\mathcal{K}$ function can be flattened and equivalently expressed as
    \begin{equation} \label{eq:ct-hocbf-cond-flatten}
    \Psi_{i}\bigl(\bm{x}\bigr) \equiv \sum_{j=0}^{i} e_{i-j}\bigl( \lambda_{1}, \ldots, \lambda_{i} \bigr) \, h^{(j)}\bigl(\bm{x}\bigr).
    \end{equation}
    Here, $e_{i-j} \bigl( \lambda_{1}, \ldots, \lambda_{i} \bigr) \coloneqq \sum_{\substack{I \subseteq \{1, \ldots, i \} \\ | I |=i-j}} \prod_{k \in I} \lambda_{k}$ denotes an elementary symmetric polynomial\footnote{E.g., if \(i=3\), then $e_{0}(\lambda_{1},\lambda_{2},\lambda_{3}) = 1, e_{1}(\lambda_{1},\lambda_{2},\lambda_{3}) = \lambda_{1} + \lambda_{2} + \lambda_{3}, e_{2}(\lambda_{1},\lambda_{2},\lambda_{3}) = \lambda_{1}\lambda_{2} + \lambda_{1}\lambda_{3} + \lambda_{2}\lambda_{3}, e_{3}(\lambda_{1},\lambda_{2},\lambda_{3}) = \lambda_{1}\lambda_{2}\lambda_{3}$.
    Note that \(e_{0}(\cdot)=1\) by convention.}, \(\lambda_{1}, \ldots, \lambda_{r}\) denote the parameters of the linear class \(\mathcal{K}\) functions, and $h^{(i)}$ denotes the $i$th time derivative of the candidate \ac{cbf} $h$.
\end{lemma}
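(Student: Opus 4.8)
The plan is to prove \eqref{eq:ct-hocbf-cond-flatten} by induction on $i$, exploiting the fact that with linear class $\mathcal{K}$ functions the recursion \eqref{eq:chain-hocbf} becomes the linear relation $\Psi_i = \dot{\Psi}_{i-1} + \lambda_i \Psi_{i-1}$, i.e.\ the application of the first-order differential operator $(\tfrac{d}{dt} + \lambda_i)$. Iterating from $\Psi_0 = h$ gives the operator factorization $\Psi_i = \bigl(\tfrac{d}{dt}+\lambda_i\bigr)\cdots\bigl(\tfrac{d}{dt}+\lambda_1\bigr) h$, and expanding this product of commuting operators yields exactly the elementary symmetric polynomials in $\lambda_1,\dots,\lambda_i$ as the coefficients of the successive time derivatives $h^{(j)}$. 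So the statement is really the well-known expansion $\prod_{k=1}^{i}(D+\lambda_k) = \sum_{j=0}^{i} e_{i-j}(\lambda_1,\dots,\lambda_i)\,D^j$ applied to $h$.

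The base case $i=1$ is immediate: $\Psi_1 = \dot h + \lambda_1 h = e_0 h^{(1)} + e_1(\lambda_1) h^{(0)}$, since $e_0 = 1$ and $e_1(\lambda_1)=\lambda_1$. For the inductive step, assume $\Psi_{i-1} = \sum_{j=0}^{i-1} e_{i-1-j}(\lambda_1,\dots,\lambda_{i-1})\,h^{(j)}$. Then
\begin{align*}
    \Psi_i &= \dot{\Psi}_{i-1} + \lambda_i \Psi_{i-1} \\
    &= \sum_{j=0}^{i-1} e_{i-1-j}(\lambda_1,\dots,\lambda_{i-1})\,h^{(j+1)} + \lambda_i \sum_{j=0}^{i-1} e_{i-1-j}(\lambda_1,\dots,\lambda_{i-1})\,h^{(j)}.
\end{align*}
Reindexing the first sum and collecting the coefficient of $h^{(j)}$ for each $j \in \{0,\dots,i\}$ gives $e_{i-j}(\lambda_1,\dots,\lambda_{i-1}) + \lambda_i\, e_{i-1-j}(\lambda_1,\dots,\lambda_{i-1})$, with the convention that $e_\ell(\cdot)=0$ for $\ell$ out of range. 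The final algebraic fact needed is the one-variable recurrence for elementary symmetric polynomials, $e_{m}(\lambda_1,\dots,\lambda_i) = e_{m}(\lambda_1,\dots,\lambda_{i-1}) + \lambda_i\, e_{m-1}(\lambda_1,\dots,\lambda_{i-1})$, which identifies this coefficient with $e_{i-j}(\lambda_1,\dots,\lambda_i)$ and closes the induction.

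The only mild subtlety — not really an obstacle — is bookkeeping the boundary terms: one must check that the $h^{(i)}$ term has coefficient $e_0 = 1$ and the $h^{(0)}$ term has coefficient $e_i(\lambda_1,\dots,\lambda_i) = \prod_{k=1}^{i}\lambda_k$, both of which fall out of the same recurrence once the out-of-range $e_\ell$ are set to zero. I would also remark that differentiability of the $\alpha_i$ (here automatic, since linear functions are smooth) is what makes $\dot\Psi_{i-1}$ well defined at each stage, so the recursion is legitimate; $h$ being $r$ times differentiable along the dynamics is implicit in $h$ having relative degree $r$.
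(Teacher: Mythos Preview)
Your proposal is correct and follows essentially the same route as the paper: induction on $i$, starting from the base case $\Psi_1 = \dot h + \lambda_1 h$, differentiating the inductive hypothesis, and closing with the recurrence $e_m(\lambda_1,\dots,\lambda_i) = e_m(\lambda_1,\dots,\lambda_{i-1}) + \lambda_i\, e_{m-1}(\lambda_1,\dots,\lambda_{i-1})$ together with the out-of-range conventions. Your operator-factorization remark $\Psi_i = \prod_{k=1}^{i}(D+\lambda_k)\,h$ is a helpful conceptual gloss that the paper does not spell out, but the underlying computation is identical.
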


\begin{proof}
    We proceed by induction on \(i\).
    
    \textbf{Base Case (\(i=1\))}: By definition, $\Psi_{1}\bigl(\bm{x}\bigr) \coloneqq \dot{h}\bigl(\bm{x}\bigr) + \lambda_{1}\, h\bigl(\bm{x}\bigr)$.
    Since \(e_{1}(\lambda_{1}) = \lambda_{1}\) and \(e_{0}(\lambda_{1}) = 1\), we have $\Psi_{1}\bigl(\bm{x}\bigr) \equiv e_{1}(\lambda_{1})\, h\bigl(\bm{x}\bigr) + e_{0}(\lambda_{1})\, \dot{h}\bigl(\bm{x}\bigr)$
    
    \textbf{Inductive Step}: Assume for some \(i \in \{1, \ldots, r-1\}\),
    \begin{equation} \label{eq:induction-hypothesis-ct}
    \Psi_{i}\bigl(\bm{x}\bigr) \equiv \sum_{j=0}^{i} e_{i-j}\bigl( \lambda_{1}, \ldots, \lambda_{i} \bigr) \, h^{(j)}\bigl(\bm{x}\bigr),
    \end{equation}
    and we need to show 
    \begin{equation} \label{eq:induction-need-to-show}
    \Psi_{i+1}\bigl(\bm{x}\bigr) \equiv \sum_{j=0}^{i+1} e_{i+1-j}\bigl( \lambda_{1}, \ldots, \lambda_{i+1} \bigr) \, h^{(j)}\bigl(\bm{x}\bigr).
    \end{equation}
    The time derivative of \eqref{eq:induction-hypothesis-ct} is given by $\dot{\Psi}_{i}\bigl(\bm{x}\bigr) \equiv \sum_{j=0}^{i} e_{i-j}\bigl( \lambda_{1}, \ldots, \lambda_{i} \bigr) \, h^{(j+1)}\bigl(\bm{x}\bigr)$. By definition in \eqref{eq:chain-hocbf}, $\Psi_{i+1}\bigl(\bm{x}\bigr) \coloneqq \dot{\Psi}_{i}\bigl(\bm{x}\bigr) + \lambda_{i+1}\,\Psi_{i}\bigl(\bm{x}\bigr)$. Substituting $\dot{\Psi}_i$ and $\Psi_i$, we obtain $\Psi_{i+1}\bigl(\bm{x}\bigr) \equiv \sum_{j=0}^{i} e_{i-j}\bigl( \lambda_{1}, \ldots, \lambda_{i} \bigr) \, h^{(j+1)}\bigl(\bm{x}\bigr) + \lambda_{i+1}\sum_{j=0}^{i} e_{i-j}\bigl( \lambda_{1}, \ldots, \lambda_{i} \bigr) \, h^{(j)}\bigl(\bm{x}\bigr).$ Consider the property of the elementary symmetric polynomial that $e_{k}\bigl( \lambda_{1}, \ldots, \lambda_{i+1} \bigr) = e_{k}\bigl( \lambda_{1}, \ldots, \lambda_{i} \bigr) + \lambda_{i+1}\, e_{k-1}\bigl( \lambda_{1}, \ldots, \lambda_{i} \bigr), \forall k \in \{0, \ldots, i+1\}$, and note the convention \(e_{i+1}(\lambda_{1}, \ldots, \lambda_{i}) = 0\) and \(e_{-1}(\lambda_{1}, \ldots, \lambda_{i}) = 0\), one easily obtains \eqref{eq:induction-need-to-show}.
\end{proof}

\begin{theorem} \label{thm:forward-invariance-hocbf}
    Let $h$ be an \ac{hocbf} as in \cref{def:continuous-t-hocbf} with relative degree \(r\) and associated sets $C_i$ as in \labelcref{eq:high-order-sets}, and let class $\mathcal{K}$ functions in \eqref{eq:chain-hocbf} be linear with parameters $\lambda_i, \forall i \in \{ 1, \ldots, r \}$. Any Lipschitz continuous controller satisfying
    \begin{equation} \label{eq:equiv-hocbf-condition}
    \sup_{\bm{u} \in \mathcal{U}} \Bigl[ \sum_{j=0}^{r} e_{r-j}\bigl( \lambda_{1}, \ldots, \lambda_{r} \bigr) \, h^{(j)}\bigl(\bm{x}\bigr) \Bigr] \ge 0
    \end{equation}
    renders $C_1$ (and $\bigcap_{i=2}^{r} C_i$) forward invariant for system \eqref{eq:affine-sys}.
\end{theorem}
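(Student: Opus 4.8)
The plan is to reduce \cref{thm:forward-invariance-hocbf} to the forward-invariance result for \acp{hocbf} from \cite{xiao2019control}, using \cref{lemma:ct-hocbf-cond-flatten} to recognize that condition \eqref{eq:equiv-hocbf-condition} is nothing but the \ac{hocbf} condition \eqref{eq:hocbf-cond} written in its flattened form. Concretely, I would first apply \cref{lemma:ct-hocbf-cond-flatten} with $i=r$, which gives $\Psi_{r}(\bm{x}) \equiv \sum_{j=0}^{r} e_{r-j}(\lambda_{1},\ldots,\lambda_{r})\,h^{(j)}(\bm{x})$. Since $h$ has relative degree $r$, the derivatives $h^{(j)}$ with $j \le r-1$ equal $L_f^{j}h(\bm{x})$ and do not depend on $\bm{u}$, while $h^{(r)}(\bm{x},\bm{u}) = L_f^{r}h(\bm{x}) + L_gL_f^{r-1}h(\bm{x})\,\bm{u}$ is the first derivative in which the control appears; hence the bracketed term in \eqref{eq:equiv-hocbf-condition} is exactly $\Psi_{r}(\bm{x},\bm{u})$, and on the domain $\bigcap_{i=1}^{r}C_i$ the two conditions coincide.

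The second step is to conclude forward invariance. Because the controller satisfying \eqref{eq:equiv-hocbf-condition} is assumed Lipschitz continuous, the closed-loop vector field is Lipschitz and trajectories are well defined. Along any trajectory with $\bm{x}(t_0) \in \bigcap_{i=1}^{r}C_i$, the inequality $\Psi_{r} \ge 0$ reads $\dot{\Psi}_{r-1}(\bm{x}(t)) \ge -\lambda_r\,\Psi_{r-1}(\bm{x}(t))$; since $\bm{x}(t_0)\in C_r$ gives $\Psi_{r-1}(\bm{x}(t_0)) \ge 0$, the comparison lemma against $\dot{y} = -\lambda_r y$, $y(t_0)\ge 0$, yields $\Psi_{r-1}(\bm{x}(t)) \ge 0$ for all $t \ge t_0$, i.e. $\bm{x}(t)\in C_r$. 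Iterating this argument downward through $i = r-1,\ldots,1$, each time using $\Psi_{i} \ge 0$ together with $\bm{x}(t_0)\in C_i$, produces $\Psi_{0}(\bm{x}(t)) = h(\bm{x}(t)) \ge 0$, i.e. $\bm{x}(t)\in C_1$, and simultaneously $\bm{x}(t)\in\bigcap_{i=2}^{r}C_i$. Alternatively, once the equivalence of \eqref{eq:equiv-hocbf-condition} and \eqref{eq:hocbf-cond} is established, one may simply invoke the corresponding statement of \cite{xiao2019control} verbatim, which is already quoted in \cref{sec:preliminaries}.

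I do not expect a genuine obstacle here, since \cref{lemma:ct-hocbf-cond-flatten} already carries the algebraic weight. The only points requiring care are (i) stating precisely that condition \eqref{eq:equiv-hocbf-condition} is to be understood on $\bigcap_{i=1}^{r}C_i$ (or along closed-loop trajectories issuing from it), so that the identification with \eqref{eq:hocbf-cond} is exact rather than merely formal, and (ii) justifying the recursive comparison step, which relies on $y\equiv 0$ being a sub-solution of $\dot{y} = -\lambda_i y$ — immediate for linear class $\mathcal{K}$ functions — and on existence of solutions, which the Lipschitz controller supplies.
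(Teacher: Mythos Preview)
Your proposal is correct and follows essentially the same approach as the paper: apply \cref{lemma:ct-hocbf-cond-flatten} at $i=r$ to identify the bracketed expression in \eqref{eq:equiv-hocbf-condition} with $\Psi_r(\bm{x},\bm{u})$, then conclude forward invariance of $\bigcap_{i=1}^{r}C_i$ via the \ac{hocbf} result of \cite{xiao2019control}. The paper's own proof is even terser---it simply states the equivalence and cites \cite[Thm.~5]{xiao2019control}---so your spelled-out recursive comparison argument is just an unfolding of that citation rather than a different route.
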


\begin{proof}
    Following \cref{lemma:ct-hocbf-cond-flatten}, the left side of \labelcref{eq:equiv-hocbf-condition} is equivalent to the last auxiliary function $\Psi_r(\bm{x}, \bm{u})$ in \acp{hocbf}. Satisfying \labelcref{eq:equiv-hocbf-condition} is equivalent to satisfying the standard \ac{hocbf} condition \eqref{eq:hocbf-cond}. The remaining proof follows \cite[Thm. 5]{xiao2019control}.
\end{proof}

\begin{corollary} \label{cor:ct-lower-bound-hocbf}
    Consider an exponential \ac{hocbf} $h(t)$ with relative degree \( r \) and initial conditions
    \begin{equation} \label{eq:initial-condition-hocbf}
    \bm{h}_{\mathrm{init}} \coloneqq \bigl[ h_{t_0}, \dot{h}_{t_0}, \ldots, h^{(r-1)}_{t_0} \bigr]^\top \in \mathbb{R}^r
    \end{equation}
    at an initial time $t_0$. The lower bound of $h(t)$, denoted as $h_{\mathrm{lb}}(t)$, imposed by the exponential \ac{hocbf} is given by 
    \begin{equation} \label{eq:lower-bound-hocbf-general-form}
    h_{\mathrm{lb}}(t) \coloneqq \sum_{i=1}^{r} c_i\, e^{-\lambda_i\, (t - t_0)}, \quad \forall t \ge t_0,
    \end{equation}
    subject to the same initial conditions $\bm{h}_{\mathrm{init}}$ in \labelcref{eq:initial-condition-hocbf}. The coefficient \( \bm{c} \coloneqq \bigr[ c_1, \ldots, c_r \bigr]^\top \in \mathbb{R}^r \) is deterministically determined by solving
    \begin{equation} \label{eq:coefficients}
        M_{\lambda} \; \bm{c} = \bm{h}_{\mathrm{init}},
    \end{equation}
    where
    \begin{equation}
        M_{\lambda} =
        \begin{bmatrix}
        (-\lambda_1)^0 & (-\lambda_2)^0 & \cdots & (-\lambda_r)^0 \\
        (-\lambda_1)^1 & (-\lambda_2)^1 & \cdots & (-\lambda_r)^1 \\
        \vdots & \vdots & \ddots & \vdots \\
        (-\lambda_1)^{r-1} & (-\lambda_2)^{r-1} & \cdots & (-\lambda_r)^{r-1}
        \end{bmatrix} \in \mathbb{R}^{r \times r}
    \end{equation}
    is a Vandermonde-like matrix, and $\lambda_1, \ldots, \lambda_r$ are the parameters of the linear class $\mathcal{K}$ functions.
\end{corollary}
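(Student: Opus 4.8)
The plan is to read the safety condition of \cref{thm:forward-invariance-hocbf} as a linear, constant‑coefficient differential inequality in $h$ along the closed‑loop trajectory, and to build $h_{\mathrm{lb}}$ from the associated homogeneous equation. Writing $D \coloneqq \mathrm{d}/\mathrm{d}t$, the key algebraic observation is that the coefficients $e_{r-j}(\lambda_{1},\ldots,\lambda_{r})$ in \labelcref{eq:equiv-hocbf-condition} are exactly the coefficients of $P(s) \coloneqq \prod_{i=1}^{r}(s+\lambda_{i}) = \sum_{j=0}^{r} e_{r-j}(\lambda_{1},\ldots,\lambda_{r})\, s^{j}$. Hence, by \cref{lemma:ct-hocbf-cond-flatten}, $\Psi_{r}(\bm{x},\bm{u}) = P(D)\,h$, so any controller fulfilling the exponential \ac{hocbf} condition enforces $P(D)\,h(t) \ge 0$ for all $t \ge t_{0}$. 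I would then \emph{define} $h_{\mathrm{lb}}$ as the solution of $P(D)\,h_{\mathrm{lb}} = 0$ with initial data $\bm{h}_{\mathrm{init}}$ from \labelcref{eq:initial-condition-hocbf}; assuming the $\lambda_{i}$ are pairwise distinct, $P$ has simple roots $-\lambda_{i}$, so its solution space is spanned by $e^{-\lambda_{i}(t-t_{0})}$, which gives the claimed form \labelcref{eq:lower-bound-hocbf-general-form}.

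To pin down $\bm{c}$, I would differentiate $h_{\mathrm{lb}}(t) = \sum_{i=1}^{r} c_{i}\, e^{-\lambda_{i}(t-t_{0})}$ term by term, obtaining $h_{\mathrm{lb}}^{(j)}(t_{0}) = \sum_{i=1}^{r} c_{i}\,(-\lambda_{i})^{j}$ for $j \in \{0,\ldots,r-1\}$. Stacking these $r$ scalar equations and matching the left‑hand sides to the entries of $\bm{h}_{\mathrm{init}}$ is precisely \labelcref{eq:coefficients}; since $M_{\lambda}$ is a Vandermonde matrix in the nodes $-\lambda_{1},\ldots,-\lambda_{r}$, it is nonsingular exactly when the $\lambda_{i}$ are distinct, so $\bm{c}$ is uniquely determined by $\bm{h}_{\mathrm{init}}$.

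The substantive step is the comparison inequality $h(t) \ge h_{\mathrm{lb}}(t)$. Setting $\delta(t) \coloneqq h(t) - h_{\mathrm{lb}}(t)$, we have $P(D)\,\delta \ge 0$ and, by the matched initial data, $\delta^{(j)}(t_{0}) = 0$ for $j \in \{0,\ldots,r-1\}$. I would exploit the operator factorization $P(D) = L_{1}\cdots L_{r}$ with the pairwise‑commuting first‑order operators $L_{i} \coloneqq D + \lambda_{i}$ and peel off one factor at a time: define $w_{r} \coloneqq \delta$ and $w_{k-1} \coloneqq L_{k} w_{k}$ for $k = r,\ldots,1$, so that $w_{0} = P(D)\,\delta \ge 0$; each $w_{k}$ is a fixed linear combination of $\delta,\dot{\delta},\ldots,\delta^{(r-k)}$, hence $w_{k}(t_{0}) = 0$ for every $k \ge 1$. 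A backward induction then applies the scalar comparison lemma: from $L_{k} w_{k} = w_{k-1} \ge 0$ with $w_{k}(t_{0}) = 0$, variation of parameters gives $w_{k}(t) = \int_{t_{0}}^{t} e^{-\lambda_{k}(t-\tau)}\, w_{k-1}(\tau)\,\mathrm{d}\tau \ge 0$; chaining from $w_{0} \ge 0$ down to $w_{r} = \delta$ yields $\delta(t) \ge 0$, i.e., $h(t) \ge h_{\mathrm{lb}}(t)$ for all $t \ge t_{0}$. Tightness of the bound follows because a controller achieving equality in the \ac{hocbf} condition makes $P(D)\,h \equiv 0$, so then $h \equiv h_{\mathrm{lb}}$.

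The main obstacle I anticipate is the bookkeeping in the peeling argument — in particular, verifying that every intermediate initial condition $w_{k}(t_{0})$ vanishes, and that the derivatives $h,\dot{h},\ldots,h^{(r)}$ along the closed‑loop trajectory exist and are (absolutely) continuous so that the integral representations are valid; the latter is secured by the Lipschitz‑controller hypothesis, since the relative‑degree assumption reduces $h^{(j)}$ to $L_{f}^{j} h(\bm{x})$ for $j \le r-1$, while $h^{(r)}$ depends continuously on $\bm{u}$. A secondary point worth stating explicitly is that the distinctness of the $\lambda_{i}$ is exactly what makes $M_{\lambda}$ invertible and what renders \labelcref{eq:lower-bound-hocbf-general-form} a complete solution basis; repeated parameters would introduce polynomial‑in‑$t$ factors and a confluent Vandermonde system.
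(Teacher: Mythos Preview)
Your argument is correct and structurally mirrors the paper's: both recognize that \labelcref{eq:equiv-hocbf-condition} reads $P(D)\,h \ge 0$ with $P(s)=\prod_{i=1}^{r}(s+\lambda_i)$, define $h_{\mathrm{lb}}$ as the solution of the homogeneous equation with matched initial data, and then show the difference $h - h_{\mathrm{lb}}$ is nonnegative. Where you differ is in the comparison step: the paper disposes of $L_D\,h_{\mathrm{aux}} \ge 0$, $h_{\mathrm{aux}}(t_0)=0$ in one line by appeal to ``the maximum principle,'' whereas you actually \emph{prove} the needed comparison result via the operator factorization $P(D)=\prod_i(D+\lambda_i)$ and a cascade of first-order integrating-factor estimates. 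Your route is more self-contained and makes explicit what the paper leaves implicit --- in particular, that \emph{all} initial derivatives $\delta^{(j)}(t_0)$, $j=0,\ldots,r-1$, must vanish for the induction to start (the paper only writes $h_{\mathrm{aux}}(t_0)=0$), and that the $\lambda_i$ must be pairwise distinct for $M_\lambda$ to be invertible and for \labelcref{eq:lower-bound-hocbf-general-form} to span the solution space. The paper's version is terser; yours is more rigorous and would survive scrutiny without an external reference.
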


\begin{proof}
    Following \cref{thm:forward-invariance-hocbf}, the condition imposed by the exponential \ac{hocbf} is equivalent to \labelcref{eq:equiv-hocbf-condition}. Observe that the equality form of \labelcref{eq:equiv-hocbf-condition}, i.e.,
    \begin{equation} \label{eq:equiv-hocbf-condition-equality}
    \sum_{j=0}^{r} e_{r-j}\bigl( \lambda_{1}, \ldots, \lambda_{r} \bigr) \, h^{(j)}(t) = 0,
    \end{equation}
    is a homogeneous linear differential equation, with characteristic equation $\sum_{j=0}^{r} e_{r-j}\bigl( \lambda_{1}, \ldots, \lambda_{r} \bigr) \, \lambda^{j} = 0$. Given the property of the elementary symmetric polynomial, this characteristic equation can be equivalently expressed as $\prod_{i}^{r} (\lambda + \lambda_i) = 0$, with roots $\lambda_1, \ldots, \lambda_r$.\footnote{
    Take $r=2$ as an example: $\sum_{j=0}^{2} e_{2-j}\bigl( \lambda_{1}, \lambda_{2} \bigr) \, \lambda^{j} = \lambda^2 + (\lambda_1+\lambda_2)\lambda+\lambda_1\lambda_2=(\lambda+\lambda_1)(\lambda+\lambda_2)$, with roots $\lambda = -\lambda_1$ and $\lambda = -\lambda_2$.} Importantly, observe that \labelcref{eq:lower-bound-hocbf-general-form} is the unique solution for \labelcref{eq:equiv-hocbf-condition-equality}, i.e., $L_D \; h_{\mathrm{lb}}(t) = 0$ holds, where $L_D$ denotes a linear polynomial differential operator to an arbitrary continuously differentiable function $y(t)$ as $L_D\,y(t) \coloneqq \sum_{j=0}^{r} e_{r-j}\bigl( \lambda_{1}, \ldots, \lambda_{r} \bigr)\, y^{(j)}(t)$. Then, \labelcref{eq:equiv-hocbf-condition} is equivalent to $L_D \; h(t) \ge 0$. Introduce an auxiliary function $h_{\mathrm{aux}}(t) \coloneqq h(t) - h_{\mathrm{lb}}(t)$. By linearity of \( L_D \), $L_D \, h_{\mathrm{aux}}(t) = L_D\,h(t) - L_D\,h_{\mathrm{lb}}(t) \ge 0, \forall t \ge t_0$. Moreover, \( h_{\mathrm{aux}}(t_0)=h(t_0)-h_{\mathrm{lb}}(t_0)=0 \) since $h$ and $h_{\mathrm{lb}}$ have the same initial condition. By the maximum principle, it follows that \( h_{\mathrm{aux}}(t) \ge 0, \forall t \ge t_0 \), i.e., $h(t) \ge h_{\mathrm{lb}}(t), \forall t \ge t_0$. We conclude that $h_{\mathrm{lb}}(t)$ in \eqref{eq:lower-bound-hocbf-general-form} is the lower bound of $h(t)$.
\end{proof}

\begin{remark} \label{rem:hocbf}
Following \cref{cor:ct-lower-bound-hocbf}, the lower bound imposed by exponential \acp{hocbf} depends not only on the parameters of the linear class $\mathcal{K}$ functions but also on the initial conditions of the candidate \ac{cbf} and its time derivatives. On the other hand, \acp{hocbf} require that the initial value of each auxiliary function in \labelcref{eq:chain-hocbf} (except for the last one, $\Psi_{r}$) be non-negative \cite{xiao2019control}. To realize this, one can recursively show that sufficient (but not necessary) conditions are $\lambda_{i} \geq -h^{(i)} / {h^{(i-1)}}, \forall i \in \{1, \ldots, r-1\}$, and $\lambda_r > 0$. Therefore, the selection of the parameters is restricted by these initial conditions. These couplings can complicate control design, which motivates our approach presented next.
\end{remark}

\subsection{Truncated Taylor \acp{cbf}} \label{sec:taylor-based-alternative}
This section presents our \ac{ttcbf} approach to handle constraints with high relative degrees while requiring only one class $\mathcal{K}$ function.

Consider a \ac{cbf} with relative degree $r$. The fundamental discrete-time \ac{cbf} condition that we need to guarantee is \labelcref{eq:cbf-cond-dt}, i.e., $\Delta h \bigl(\bm{x}_k,\bm{u}_k\bigr) + \alpha \bigl( h(\bm{x}_{k}) \bigr) \ge 0, \forall\bm{x}_{k}\in C$. We approximate $\Delta h \bigl(\bm{x}_k,\bm{u}_k\bigr)$ with a truncated Taylor series with up to the $r$th time derivative of $h$, i.e., $\Delta h(\bm{x}_k, \bm{u}_k) \approx \Delta t \dot{h}(\bm{x}_k) + \frac{1}{2}\Delta t^2 \ddot{h}(\bm{x}_k) + \cdots + \frac{1}{r!}\Delta t^r h^{(r)}(\bm{x}_k,\bm{u}_k)$. Here, the $r$th derivative $h^{(r)}(\bm{x}_k,\bm{u}_k)$ captures the control input.

\begin{definition}[\aclp{ttcbf}] \label{def:taylor-cbf}
    Given a set $C$ as in \eqref{eq:safe-set}. An $(r+1)$th continuously differentiable function $h: \mathcal{X} \to \mathbb{R}$ is a candidate \acl{ttcbf} (\acs{ttcbf}) with relative degree $r$ for system \labelcref{eq:affine-sys-discrete} if there exists a class $\mathcal{K}$ function $\alpha$ satisfying $\alpha(z) \le z$ such that $\forall \bm{x}_k \in C$,
    \begin{equation} \label{eq:taylor-hocbf-cond}
    \begin{aligned}
        \sup_{\bm{u}_k \in \mathcal{U}} \bigl[ \Delta t \; \dot{h}(\bm{x}_k) + \cdots &+ \frac{1}{r!}\Delta t^r \; h^{(r)}(\bm{x}_k,\bm{u}_k) + \\
        & \alpha\bigl(h(\bm{x}_k)\bigr) \bigr] \ge \gamma \Delta t^{r+1},
    \end{aligned}
    \end{equation}
    Here, $\gamma$ is a design parameter that satisfies $\gamma \ge \frac{\Gamma_{r+1}}{(r+1)!}$, and we assume that the $(r+1)$th time derivative of $h$ is uniformly bounded over $C$, i.e., there exists a constant $\Gamma_{r+1} > 0$ such that $\left| h^{(r+1)}(\bm{x}) \right| \le \Gamma_{r+1}, \forall \bm{x} \in C$.
\end{definition}

\begin{theorem} \label{thm:taylor}
Given a \ac{ttcbf} in \cref{def:taylor-cbf} with associated set $C$ from \eqref{eq:safe-set}, any Lipschitz continuous controller satisfying \eqref{eq:taylor-hocbf-cond} renders $C$ forward invariant for system \eqref{eq:affine-sys-discrete}.
\end{theorem}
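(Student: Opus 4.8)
The plan is to show that any controller satisfying the \ac{ttcbf} condition \labelcref{eq:taylor-hocbf-cond} also satisfies the standard discrete-time \ac{cbf} condition \labelcref{eq:cbf-cond-dt}, and then invoke the forward-invariance property behind \cref{def:d-cbf} (established in \cite{agrawal2017discrete}). The bridge is Taylor's theorem with the Lagrange form of the remainder: interpreting the increment along the (zero-order-hold) trajectory as a function of the sampling interval $\Delta t$, and using that $h$ is $(r+1)$ times continuously differentiable with relative degree $r$, I would write
\[
    \Delta h(\bm{x}_k,\bm{u}_k)
    = \sum_{j=1}^{r} \frac{\Delta t^{j}}{j!}\, h^{(j)}(\bm{x}_k,\bm{u}_k)
      \;+\; \frac{\Delta t^{r+1}}{(r+1)!}\, h^{(r+1)}\!\bigl(\bm{x}(\xi)\bigr),
\]
where $\xi$ lies in the sampling interval and only the $j=r$ term depends on $\bm{u}_k$, since $L_g L_f^{i} h \equiv 0$ for $i \le r-2$ and hence $h^{(j)} = L_f^{j} h$ for $j \le r-1$; the final term is exactly the truncation error incurred in \cref{def:taylor-cbf}.

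Next, I would bound the remainder from below. The uniform bound $\left| h^{(r+1)} \right| \le \Gamma_{r+1}$ gives $\frac{\Delta t^{r+1}}{(r+1)!}\,h^{(r+1)}(\bm{x}(\xi)) \ge -\frac{\Gamma_{r+1}}{(r+1)!}\,\Delta t^{r+1} \ge -\gamma\,\Delta t^{r+1}$, where the last step is precisely the design requirement $\gamma \ge \Gamma_{r+1}/(r+1)!$. Rearranging the controller's inequality \labelcref{eq:taylor-hocbf-cond} to $\sum_{j=1}^{r}\frac{\Delta t^{j}}{j!} h^{(j)}(\bm{x}_k,\bm{u}_k) \ge \gamma\,\Delta t^{r+1} - \alpha\bigl(h(\bm{x}_k)\bigr)$ and substituting both into the Taylor identity, the two $\gamma\,\Delta t^{r+1}$ contributions cancel, leaving $\Delta h(\bm{x}_k,\bm{u}_k) \ge -\alpha\bigl(h(\bm{x}_k)\bigr)$, i.e., the controller meets \labelcref{eq:cbf-cond-dt}.

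Finally, I would finish with the standard one-step induction. If $\bm{x}_k \in C$ then $h(\bm{x}_k) \ge 0$, and since $\alpha$ is a class $\mathcal{K}$ function with $\alpha(z)\le z$ we have $0 \le \alpha\bigl(h(\bm{x}_k)\bigr) \le h(\bm{x}_k)$; hence $h(\bm{x}_{k+1}) = h(\bm{x}_k) + \Delta h(\bm{x}_k,\bm{u}_k) \ge h(\bm{x}_k) - \alpha\bigl(h(\bm{x}_k)\bigr) \ge 0$, so $\bm{x}_{k+1}\in C$. Iterating from an arbitrary $\bm{x}_{k_0} \in C$ yields $\bm{x}_k \in C$ for all $k \ge k_0$, which is forward invariance.

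I expect the main obstacle to be a modeling subtlety rather than an algebraic one: the quantities $h^{(j)}$ for $j \ge 1$ and the remainder point $\bm{x}(\xi)$ only make sense if the discrete-time system \labelcref{eq:affine-sys-discrete} is viewed as a zero-order-hold sampling of an underlying continuous flow, and the bound $\Gamma_{r+1}$ must then be read as holding along that intermediate arc (and over $\bm{u}_k \in \mathcal{U}$), not merely at sampled states in $C$. I would state this assumption explicitly at the outset of the proof. Everything downstream --- the cancellation of the $\gamma\,\Delta t^{r+1}$ terms and the class-$\mathcal{K}$ bookkeeping --- is routine.
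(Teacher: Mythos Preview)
Your proposal is correct and follows essentially the same route as the paper: Taylor's theorem with Lagrange remainder, the bound $|R_{r+1}|\le \frac{\Gamma_{r+1}}{(r+1)!}\Delta t^{r+1}$, substitution of \labelcref{eq:taylor-hocbf-cond} to recover $\Delta h(\bm{x}_k,\bm{u}_k)+\alpha(h(\bm{x}_k))\ge 0$, and then the discrete-time \ac{cbf} forward-invariance argument (which the paper outsources to \cite[Thm.~1]{ahmadi2019safe} while you spell out the one-step induction). Your closing remark about the zero-order-hold interpretation and where $\Gamma_{r+1}$ must actually hold is a genuine subtlety the paper leaves implicit, and making it explicit strengthens the argument.
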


\begin{proof}
Because $h$ is $(r+1)$th continuously differentiable, Taylor's theorem guarantees the existence of a point $\bm{\xi}$ between $\bm{x}_k$ and $\bm{x}_{k+1}$ such that $h(\bm{x}_{k+1}) = h(\bm{x}_k) + \Delta t\, \dot{h}(\bm{x}_k) + \cdots + \frac{1}{r!}\Delta t^r h^{(r)}(\bm{x}_k,\bm{u}_k) + R_{r+1}$, with the remainder $R_{r+1} \coloneqq \frac{1}{(r+1)!}\Delta t^{r+1}\, h^{(r+1)}(\bm{\xi})$.
Since the $h^{(r+1)}$ is bounded by $\Gamma_{r+1}$, it follows that $\left| R_{r+1} \right| \le \frac{\Gamma_{r+1}}{(r+1)!}\Delta t^{r+1}$. Thus, we have $h(\bm{x}_{k+1}) \ge h(\bm{x}_k) + \Delta t\, \dot{h}(\bm{x}_k) + \cdots + \frac{1}{r!}\Delta t^r\, h^{(r)}(\bm{x}_k) - \frac{\Gamma_{r+1}}{(r+1)!}\Delta t^{r+1}$. Given \labelcref{eq:taylor-hocbf-cond}, one easily obtains
\begin{equation} \label{eq:taylor-cond-proof}
    h(\bm{x}_{k+1}) \ge h(\bm{x}_k) - \alpha\bigl(h(\bm{x}_k)\bigr) + \left(\gamma - \frac{\Gamma_{r+1}}{(r+1)!}\right) \Delta t^{r+1}.
\end{equation}
Provided that $\gamma \ge \frac{\Gamma_{r+1}}{(r+1)!}$, we have $h(\bm{x}_{k+1}) \ge h(\bm{x}_k) - \alpha\bigl(h(\bm{x}_k)\bigr)$, i.e., $\Delta h(\bm{x}_k) + \alpha\bigl(h(\bm{x}_k)\bigr) \ge 0$, rendering \labelcref{eq:taylor-hocbf-cond} to be a sufficient condition of the discrete-time \ac{cbf} condition \labelcref{eq:cbf-cond-dt}. The remaining proof follows \cite[Thm. 1]{ahmadi2019safe}.
\end{proof}

The design parameter $\gamma$ in \labelcref{eq:taylor-hocbf-cond} considers the Taylor series truncation error. We will discuss how to handle it in \cref{sec:discussions}.

\section{Numerical Experiments} \label{sec:experiments}

We revisit the collision-avoidance example in \cref{fig_example_class_k}. In \cref{sec:qp-formulation}, we formulate the \ac{qp} problem using both the standard \ac{hocbf} approach and our \ac{ttcbf} approach. We present the experimental results of the standard \ac{hocbf} approach in \cref{sec:first-experiment} and those of our approach in \cref{sec:second-experiment}. Code reproducing the experimental results is available at our open-source  repository\footnote{\href{https://github.com/bassamlab/sigmarl}{github.com/bassamlab/sigmarl}} \cite{xu2024sigmarl}.

\begin{table}[t]
    \caption{Parameters used in the experiments.}
    \centering
    \begin{threeparttable}
        \begin{tabular}{ll}
            \toprule
            Parameters & Values \\
            \midrule
            Robot and obstacle radii $r, r_{\mathrm{obs}}$ & \SI{1}{\meter}, \SI{2}{\meter} \\
            Obstacle position $(x_{\mathrm{obs}}, y_{\mathrm{obs}})$ & $(\SI{0}{\meter}, \SI{-3.1}{\meter})$ \\
            Robot initial position & $(\SI{-10}{\meter}, \SI{0}{\meter})$ \\
            Robot initial speed, acceleration & \SI{10}{\meter\per\second}, \SI{0}{\meter\per\second\squared} \\
            Reference $y_{\mathrm{ref}}, v_{x,\mathrm{ref}}, v_{y,\mathrm{ref}}$ & \SI{0}{\meter}, \SI{10}{\meter\per\second}, \SI{0}{\meter\per\second}  \\
            Penalty parameters $p_{v_x}, p_{v_y}, p_{y}$ & 1, 1, 1000 \\
            Admissible control input $\bm{u}_{\min}, \bm{u}_{\max}$ & $\pm \SI{1000}{\meter\per\second\squared}$ \\
            Sampling period $\Delta t$, simulation duration & \SI{0.01}{\second}, \SI{2}{\second} \\
            Term $\gamma \Delta t^{r+1}$ in \eqref{eq:taylor-hocbf-cond} & $\approx 0$ \\
            \bottomrule
        \end{tabular}
    \end{threeparttable}
    \label{tab:parameters}
\end{table}

\subsection{\acl{qp} Formulation} \label{sec:qp-formulation}
To facilitate the computation of time derivatives, we select the square of the distance between the robot and the obstacle as the candidate \ac{cbf}, i.e.,
\begin{equation} \label{eq:eva-cbf}
    h\bigl(\bm{x}(t)\bigr) = \bigl( x(t) - x_{\mathrm{obs}} \bigr)^2 + \bigl(y(t) - y_{\mathrm{obs}}\bigr)^2 - \bigl(r + r_{\mathrm{obs}}\bigr)^2,
\end{equation}
where $\bm{x}(t) \coloneqq \bigl(x(t), y(t)\bigr)^\top \in \mathbb{R}^2$ and $(x_{\mathrm{obs}}, y_{\mathrm{obs}}) \in \mathbb{R}^2$ denote the positions of the robot and the obstacle, respectively. The radii of the robot and the obstacle are given by $r > 0$ and $r_{\mathrm{obs}} > 0$, respectively. Clearly, if $h\bigl(\bm{x}(t)\bigr) \ge 0$ for all $t > 0$, the robot remains safe. We model the robot dynamics as a double integrator and choose the accelerations along the $x$- and $y$-directions as the control inputs, denoted by $\bm{u}(t) \coloneqq \bigl(u_{x}(t), u_{y}(t)\bigr)^\top$. Therefore, the candidate \ac{cbf} in \eqref{eq:eva-cbf} has relative degree two. Since the obstacle position is constant, the first- and second-time derivatives of $h$ (omitting the explicit time dependence for brevity) are
\begin{align*}
    \dot{h} &= 2\Bigl(x - x_{\mathrm{obs}}\Bigr)\dot{x} + 2\Bigl(y - y_{\mathrm{obs}}\Bigr)\dot{y}, \\
    \ddot{h} &= 2\Bigl(\dot{x}^2 + \dot{y}^2 + \Bigl(x - x_{\mathrm{obs}}\Bigr) u_{x} + \Bigl(y - y_{\mathrm{obs}}\Bigr) u_{y}\Bigr).
\end{align*}

We conduct the experiments in simulation using discrete time. We use a sampling period of $\Delta t = \SI{0.01}{\second}$ and set the duration of each simulation to \SI{2}{\second}. Let $k \in \mathbb{N}$ denote the current time step. We formulate the \ac{qp} problem as
\begin{subequations} \label{eq:qp}
    \begin{align}
        \bm{u}_k  &= \arg\min_{\bm{u}_k} \; p_{v_x} \Bigl(\hat{v}_{x,k+1} - v_{x,\mathrm{ref}}\Bigr)^2 + \notag \\
        &\quad \quad p_{v_y} \Bigl(\hat{v}_{y,k+1} - v_{y,\mathrm{ref}}\Bigr)^2 + p_{y} \Bigl(\hat{y}_{k+1} - y_{\mathrm{ref}}\Bigr)^2, \label{eq:qp-cost-fcn} \\
        \text{s.t.} &\quad \ddot{h}_k + (\lambda_1 + \lambda_2) \dot{h}_k + \lambda_1 \lambda_2 h_k \ge 0 \ (\text{see } \eqref{eq:ct-hocbf-cond-flatten})  \quad \text{or} \notag \\
        & \quad \Delta t \dot{h}_k + \frac{1}{2} \Delta t^2 \ddot{h}_k + \lambda_1 h_k \ge \gamma \Delta t^3 \ (\text{see } \eqref{eq:taylor-hocbf-cond}), \label{eq:qp-safe-constraint} \\
        &\quad \bm{u}_{\min} \leq \bm{u}_k \leq \bm{u}_{\max}. \label{eq:qp-u-limit}
    \end{align}
\end{subequations}
The cost function in \eqref{eq:qp-cost-fcn} comprises three terms. The first two terms penalize deviations from the reference speeds, $v_{x,\mathrm{ref}}$ along the $x$-axis and $v_{y,\mathrm{ref}}$ along the $y$-axis. The predicted next-step speed along the $x$-axis is computed as $\hat{v}_{x,k+1}(u_{x,k}) = v_{x,k} + \Delta t \, u_{x,k}$, under a zero-order hold assumption of the control input, where $v_{x,k}$ denotes the current $x$-speed. Similarly, $\hat{v}_{y,k+1}(u_{y,k}) = v_{y,k} + \Delta t \, u_{y,k}$, and $v_{y,\mathrm{ref}}$ is the reference speed along the $y$-axis. In addition, to encourage the robot to follow its reference path, which is a horizontal line defined by $y = y_{\mathrm{ref}}$, we penalize the deviation from it. The predicted next-step $y$-position is computed as $\hat{y}_{k+1}(u_{y,k}) = y_{k} + \Delta t \, v_{y,k} + \tfrac{1}{2}\Delta t^2 \, u_{y,k}$. The penalty parameters are denoted by $p_{v_x}$, $p_{v_y}$, and $p_{y}$. Constraint \eqref{eq:qp-safe-constraint} is the \ac{cbf} condition that ensures safety and is computed as \eqref{eq:ct-hocbf-cond-flatten} when using the standard \ac{hocbf} approach or as \eqref{eq:taylor-hocbf-cond} when using our approach. We use linear class $\mathcal{K}$ functions for both approaches. Since the \ac{cbf} \eqref{eq:eva-cbf} has relative degree two, the \ac{hocbf} approach requires two class $\mathcal{K}$ functions, with their parameters denoted by $\lambda_1$ and $\lambda_2$. Our approach requires only one, with its parameter denoted by $\lambda_1$. For simplicity, we ignore the term $\gamma \Delta t^3$ in \eqref{eq:qp-safe-constraint} in the case of our approach. Constraint \eqref{eq:qp-u-limit} imposes admissible control input, which is set to sufficiently high since they are not the focus of our experiments. We solve \eqref{eq:qp} iteratively at each time step $k$ and apply the solution $\bm{u}_k$.

\begin{figure*}[t]
    \centering
    \begin{subfigure}{0.34\linewidth}
        \centering
        \includegraphics[width=\linewidth]{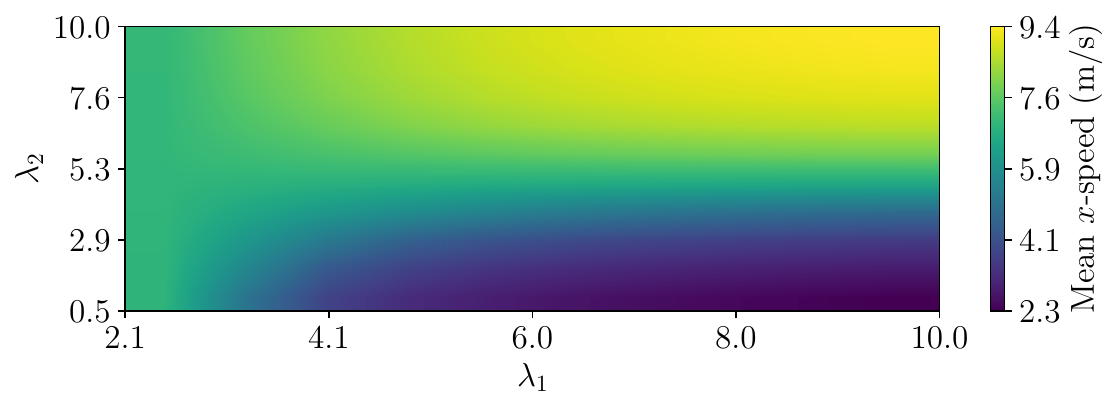}
        \caption{Mean $x$-speed.}\label{fig_heatmap_hocbf}
    \end{subfigure}
    \hfill
    \begin{subfigure}{0.30\linewidth}
        \centering
        \includegraphics[width=\linewidth]{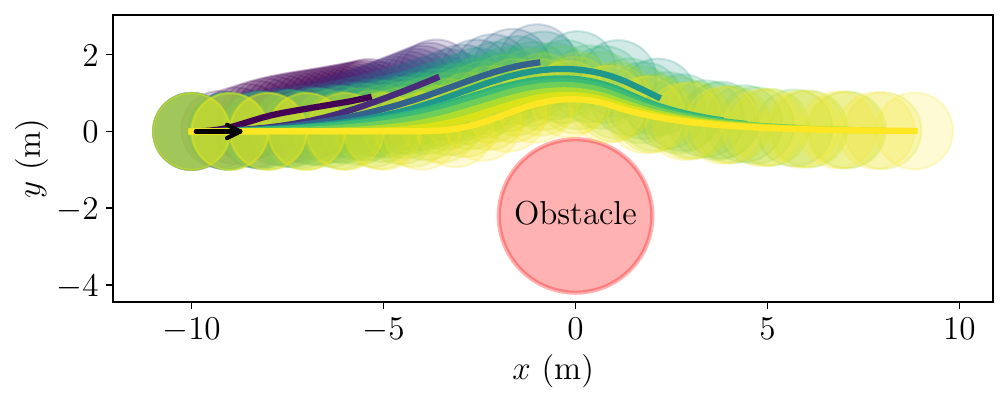}
        \caption{Footprint.} \label{fig_footprint_hocbf}
    \end{subfigure}
    \hfill
    \begin{subfigure}{0.33\linewidth}
        \centering
        \includegraphics[width=\linewidth]{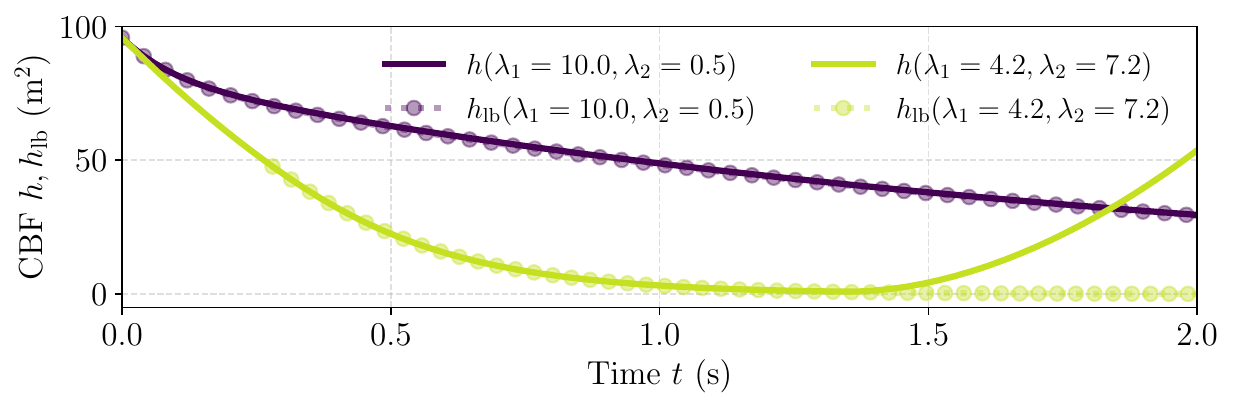}
        \caption{\ac{cbf} value $h$ and its analytical lower bound $h_{\mathrm{lb}}$ computed by \eqref{eq:lower-bound-hocbf-general-form}.}\label{fig_h_hocbf}
    \end{subfigure}
    \caption{Experimental results from the standard \ac{hocbf} approach with different parameters for the class $\mathcal{K}$ functions.}
    \label{fig_hocbf}
\end{figure*}

\begin{figure*}[t]
    \centering
    \begin{subfigure}{0.34\linewidth}
        \centering
        \includegraphics[width=\linewidth]{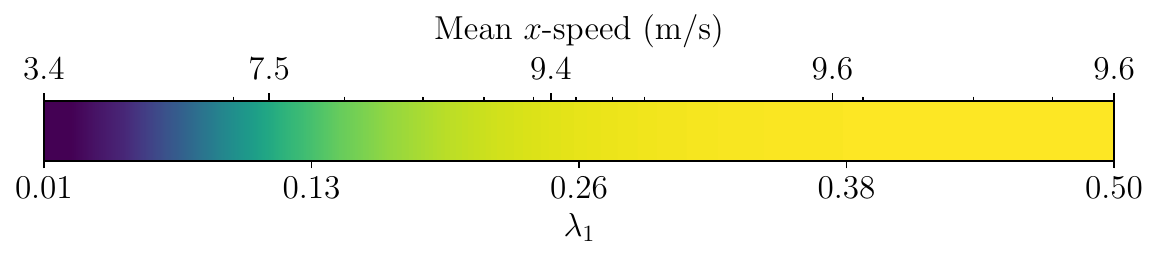}
        \caption{Mean $x$-speed.}\label{fig_heatmap_taylor}
    \end{subfigure}
    \hfill
    \begin{subfigure}{0.30\linewidth}
        \centering
        \includegraphics[width=\linewidth]{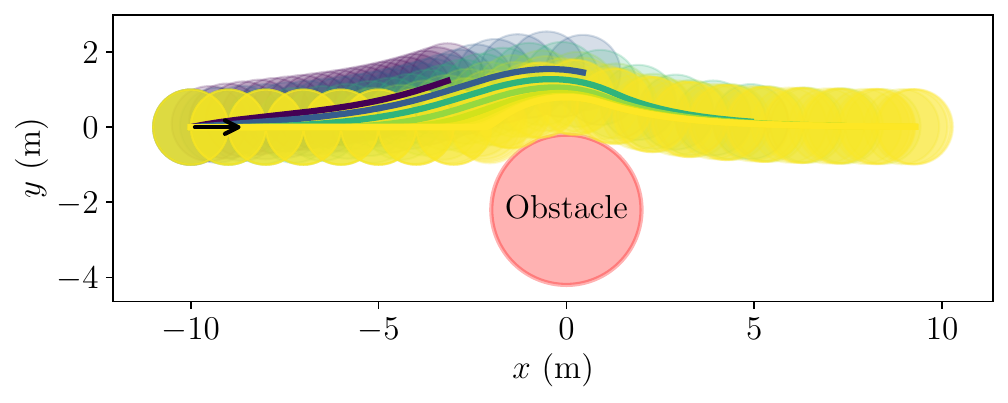}
        \caption{Footprint.} \label{fig_footprint_taylor}
    \end{subfigure}
    \hfill
    \begin{subfigure}{0.33\linewidth}
        \centering
        \includegraphics[width=\linewidth]{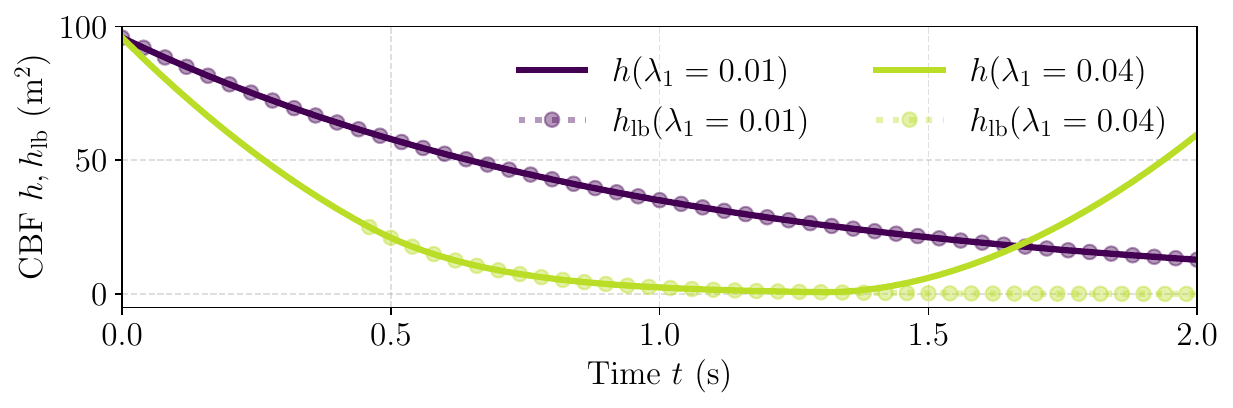}
        \caption{\ac{cbf} value $h$ and its analytical lower bound $h_{\mathrm{lb}}$ computed by \eqref{eq:lower-bound-de-cbf}.}\label{fig_h_taylor}
    \end{subfigure}
    \caption{Experimental results from our Taylor-based approach with different parameters for the class $\mathcal{K}$ function.}
    \label{fig_taylor}
\end{figure*}

\subsection{First Experiment: \ac{hocbf} Approach} \label{sec:first-experiment}
In this experiment, we apply the standard \ac{hocbf} approach and evaluate its performance under different parameter settings of the class $\mathcal{K}$ functions. We run simulations over a dense grid of parameter pairs, with $(\lambda_1, \lambda_2) \in [2.1, 10] \times [0.5,\, 10]$. Note that we choose $\lambda_{1, \min} = 2.1$ because the \ac{hocbf} approach requires $\lambda_1 \ge -\dot{h}_0 / h_0$, as discussed in \cref{rem:hocbf}. From the given initial conditions (see \cref{tab:parameters}), we compute $-\dot{h}_0 / h_0 \approx 2.10$. We empirically set the other three bounds, $\lambda_{1, \max}$, $\lambda_{2, \min}$, and $\lambda_{2, \max}$, to be sufficiently low or high to cover a broad range of control behaviors. With our \ac{qp} formulation in \eqref{eq:qp}, the controller decelerates the robot when the \ac{cbf} condition is active, especially in the $x$-direction. Therefore, we use the mean $x$-speed during each simulation as an indicator of the controller's performance. Using the total mean speed yields similar results.

One typical statement in \acp{cbf} is that a linear class $\mathcal{K}$ function with a larger parameter is less conservative, and vice versa \cite{zeng2021safetycriticala}. However, our experimental results show that this does not necessarily hold in \acp{hocbf}. \Cref{fig_heatmap_hocbf} shows the mean $x$-speed for all parameter pairs. A higher mean speed indicates a less conservative controller. As observed, if $\lambda_1 > 3.0$, the achieved mean $x$-speed increases with $\lambda_2$; otherwise, it remains unchanged regardless of $\lambda_2$. In contrast, if $\lambda_2 < 5.3$, increasing $\lambda_1$ reduces the achieved mean $x$-speed, and vice versa. These observations can be explained by our analytical lower bound \eqref{eq:lower-bound-hocbf-general-form}. Although the decay constant $\lambda_i$ of each sum term matches the parameter of one of the linear class $\mathcal{K}$ functions, its coefficient $c_i$ is jointly determined by all these parameters and the initial conditions (see \eqref{eq:coefficients}), preventing the direct application of the above statement in \acp{hocbf}.

\Cref{fig_footprint_hocbf} shows the robot footprints for each parameter pair, using the same color coding as in \cref{fig_heatmap_hocbf}. Some parameter pairs cause the robot to fail to bypass the obstacle within the simulation duration, because the imposed lower bound on the \ac{cbf} value is high, forcing a low decay rate of $h(t)$. \Cref{fig_h_hocbf} illustrates this with two representative parameter pairs. The solid lines represent actual \ac{cbf} values, and the dotted lines with circle markers represent analytical lower bounds computed by \eqref{eq:lower-bound-hocbf-general-form}. Consider the most conservative parameter pair with $\lambda_1 = 10$ and $\lambda_2 = 0.5$. Applying \eqref{eq:lower-bound-hocbf-general-form} with the coefficients $\bm{c}$ obtained by solving \eqref{eq:coefficients} under the initial condition $t_0 = \SI{0}{\second}$ (i.e., $h_{\mathrm{init}} = [h_{t_0}, \dot{h}_{t_0}]^\top = [95.8, -200]^\top$) yields
\[
    \bm{c} = M_{\lambda}^{-1} \, \bm{h}_{\mathrm{init}}
    = 
    \begin{bmatrix}
        1 & 1\\
        -10 & -0.5
    \end{bmatrix}^{-1}
    \begin{bmatrix}
        95.8\\
        -200
    \end{bmatrix}
    =
    \begin{bmatrix}
        16.0\\
        79.8
    \end{bmatrix},
\]
and thus $h_{\mathrm{lb}}(t) = 16.0e^{-10t} + 79.8e^{-0.5t}, \forall t \ge 0$.
This lower bound $h_{\mathrm{lb}}(t)$ prevents $h(t)$ from decaying below it. Since $h(t)$ would drop under $h_{\mathrm{lb}}(t)$ if the robot moves too quickly, the controller reduces its speed to avoid this violation. \Cref{fig_h_hocbf} visualizes this lower bound in dark purple, and we see that it closely matches the actual lower bound, which validates \cref{cor:ct-lower-bound-hocbf}. Note that for the other parameter pair, we compute the lower bound with the initial conditions at $t_0 \approx \SI{0.3}{\second}$, which is the moment when the \ac{cbf} condition becomes active and $h(t)$ reaches the actual lower bound. The lower bound computed at an earlier time would only loosely bound $h(t)$ and could not convincingly validate \cref{cor:ct-lower-bound-hocbf}.

\subsection{Second Experiment: Our \ac{ttcbf} Approach} \label{sec:second-experiment}
In this experiment, we apply our \ac{ttcbf} approach proposed in \cref{sec:taylor-based-alternative}. Similar to the standard discrete-time \ac{cbf} approach, we require $\lambda_1 \in (0, 1]$. We run simulations over a dense grid of $\lambda_1 \in [0.01, 0.50]$, which covers a broad range of control behaviors in this experiment.

\Cref{fig_heatmap_taylor} shows the mean $x$-speed for each $\lambda_1$. Straightforwardly, larger values of $\lambda_1$ lead to higher mean $x$-speed, simplifying the design compared to the standard \ac{hocbf} approach that requires the joint tuning of multiple class $\mathcal{K}$ functions. \Cref{fig_footprint_taylor} shows the robot footprints, and \Cref{fig_h_taylor} selectively shows the \ac{cbf} values and their corresponding lower bounds for two distinct parameter settings. The lower bounds, shown as dotted curves with circle markers, are computed using \eqref{eq:lower-bound-de-cbf}, which is derived from standard discrete-time \acp{cbf}. As observed, our approach imposes a lower bound that matches this analytical lower bound.

\section{Discussions} \label{sec:discussions}
The experiment in \cref{sec:first-experiment} demonstrates that the performance of the \ac{hocbf} approach is jointly influenced by the parameters of the class $\mathcal{K}$ functions, and tuning even two parameters can be challenging. This complexity can grow significantly for constraints with higher relative degrees, which require more class $\mathcal{K}$ functions. Our \ac{ttcbf} approach addresses this challenge by requiring only one class $\mathcal{K}$ function. While the standard \ac{cbf} approach remains preferable for constraints with relative degree one, the advantages of our approach become evident for higher relative degrees, where it simplifies the tuning process compared to the standard \ac{hocbf} approach.

Our \ac{ttcbf} approach uses a truncated Taylor series to approximate the forward difference $\Delta h$ in the discrete-time \ac{cbf} condition \eqref{eq:cbf-cond-dt}, leading to a more constrained condition due to the additional term $\left(\gamma - \frac{\Gamma_{r+1}}{(r+1)!}\right) \Delta t^{r+1}$ in \eqref{eq:taylor-cond-proof}. This inherently introduces conservatism. However, note that this term is scaled by $\Delta t^{r+1}$. 
When $\Delta t^{r+1}$ is sufficiently small, either due to small $\Delta t$ or large $r$, this term becomes negligible. In our experiment in \cref{sec:second-experiment}, with $\Delta t = \SI{0.01}{\second}$ and $r=2$, omitting this term did not affect control performance. A sampling period of \SI{0.01}{\second} was feasible because the \ac{qp} problem \eqref{eq:qp} could be solved efficiently, averaging \SI{3}{\milli\second} per step. If this term is not negligible (for example in computationally intensive applications that require larger sampling periods), one needs to determine the design parameter $\gamma$. By \cref{def:taylor-cbf}, this parameter should be greater than or equal to the upper bound of the $(r+1)$th time derivative of the candidate \ac{cbf}, denoted by $\Gamma_{r+1}$. If this upper bound cannot be directly estimated, one can empirically select a constant $\gamma$ that yields satisfying control performance.



\section{Conclusions} \label{sec:conclusions}
In this work, we have derived the explicit form of the condition imposed by the standard \ac{hocbf} approach, expressed as a homogeneous linear differential inequality. We have demonstrated that the solution of the corresponding equality represents the imposed lower bound on the \ac{cbf} value, showing explicitly how the parameters of class $\mathcal{K}$ functions and the initial conditions of the candidate \ac{cbf} determine this lower bound. We proposed our \ac{ttcbf} approach. It is enabled by using a truncated Taylor series that approximates the discrete-time \ac{cbf} condition. It simplifies control design for constraints with high relative degrees by requiring only one class $\mathcal{K}$ function compared to the \ac{hocbf} approach, which requires multiple class $\mathcal{K}$ functions. Numerical experiments in a collision-avoidance scenario confirmed that the derived analytical lower bound closely matches the actual ones. They also validated our approach's ability to handle constraints with high relative degrees while using only one class $\mathcal{K}$ function.

\bibliographystyle{IEEEtran}
\bibliography{00_literature}

\end{document}